\title{Assembling Omnitigs using Hidden-Order de Bruijn Graphs\footnote{Partially supported by EU grant H2020-MSCA-RISE-2015 BIRDS690 No. 690941; by Basal Funds FB0001, Conicyt, Chile; by a Conicyt PhD Scholarship; and by Fondecyt Grant 1-171058, Chile.}}
\titlerunning{Assembling Omnitigs using Hidden-Order de Bruijn Graphs} 
\author[1,2]{Diego Díaz-Domínguez}
\author[3]{Djamal Belazzougui}
\author[1,4]{Travis Gagie}
\author[5]{Veli Makinen}
\author[1,2]{Gonzalo Navarro}
\author[5]{Simon J. Puglisi}
\affil[1]{CeBiB --- Center for Biotechnology and Bioengineering, Chile}
\affil[2]{Department of Computer Science, University of Chile, Chile.
  \texttt{\{diediaz,gnavarro\}@dcc.uchile.cl}}
\affil[3]{CERIST, Algeria.
  \texttt{djamal.belazzougui@gmail.com}}
\affil[4]{School of Computer Science and Telecommunications, Diego Portales University, Chile.
  \texttt{travis.gagie@gmail.com}}
\affil[5]{Department of Computer Science, University of Helsinki, Finland.
  \texttt{\{veli.makinen,simon.puglisi\}@helsinki.fi}}
\authorrunning{D. Díaz-Domínguez et al.} 
\subjclass{J.3 Life and Medical Sciences, E.1 Data Structures, G.2.2 Graph Theory}
\keywords{Genome assembly, de Bruijn graph, Omnitigs, Succinct ordinal trees}
\begin{document}

\maketitle

\begin{abstract}
{\it De novo} DNA assembly is a fundamental task in Bioinformatics, and finding Eulerian paths on de Bruijn graphs is one of the dominant approaches to it.  In most of the cases, there may be no one order for the de Bruijn graph that works well for assembling all of the reads. For this reason, some de Bruijn-based assemblers try assembling on several graphs of increasing order, in turn.  Boucher et al. (2015) went further and gave a representation making it possible to navigate in the graph and change order on the fly, up to a maximum $K$, but they can use up to $\lg K$ extra bits per edge because they use an LCP array.  In this paper, we replace the LCP array by a succinct representation of that array's Cartesian tree, which takes only 2 extra bits per edge and still lets us support interesting navigation operations efficiently.  These operations are not enough to let us easily extract unitigs and only unitigs from the graph but they do let us extract a set of safe strings that contains all unitigs.  Suppose we are navigating in a variable-order de Bruijn graph representation, following these rules: if there are no outgoing edges then we reduce the order, hoping one appears; if there is exactly one outgoing edge then we take it (increasing the current order, up to $K$); if there are two or more outgoing edges then we stop.  Then we traverse a (variable-order) path such that we cross edges only when we have no choice or, equivalently, we generate a string appending characters only when we have no choice.  It follows that the strings we extract are safe. Our experiments show we extract a set of strings more informative than the unitigs, while using a reasonable amount of memory.
 \end{abstract}

 \section{Introduction}
 \label{sec:introduction}

  \emph{De novo} DNA assembly is one of the oldest problems in Bioinformatics. It consists of inferring the sequences of a genome from a set of \emph{sequencing reads}. The classical approach is to first create a de Bruijn graph of order $K$ from the reads and then finding Eulerian paths in that graph. The choice of $K$ is always a major concern, especially when the sequencing coverage is not uniform. If $K$ is too short, then the graph becomes too dense and uninformative. On the other hand, if $K$ is too big, the graph becomes disconnected. There exist some tools \cite{Cha2016} that help choose a good $K$. However, in most cases, the Eulerian paths of maximum length cannot be obtained with just one $K$. For this reason, some de Bruijn-based assemblers, such as SPAdes~\cite{Bankevich2012} and IDBA~\cite{Peng2010}, try assembling on several graphs of increasing order, in turn.  Lin and Pevzner~\cite{Lin2014} went a step further and proposed manifold de Bruijn graphs, in which the vertices are strings of possibly many different lengths --- but they provide no practical implementation.  Boucher et al.~\cite{BBGPS15} offered an implementation and more by adding a longest common prefix (LCP) array to Bowe et al.'s~\cite{BOSS12} representation of de Bruijn graphs, based on the Burrows-Wheeler Transform (BWT), making it possible to navigate in the graph and change order on the fly (up to the order for which Bowe et al.'s underlying representation is built).  Whereas Bowe et al.'s representation takes only 4 bits per edge in the graph, however, Boucher et al.\ can use up to $\lg K$ extra bits per edge, where $K$ is its maximum order.

  In this paper, we replace the LCP array by a succinct representation of its Cartesian tree, which takes only 2 extra bits per edge and still lets us support interesting navigation operations efficiently: e.g., report the labels on the out-edge leaving a vertex, cross an out-edge with a given label (increasing the order by 1, up to $K$), and reduce the order until the set of distinct labels on the out-edges grows.  These operations alone do not let us easily extract unitigs and only unitigs from the graph (i.e., the labels of paths of vertices with in- and out-degree 1), but they do let us extract a set of safe strings that contains all unitigs.

  Tomescu and Medvedev~\cite{Tomescu2016} defined a string to be {\em safe} if it occurs in all valid reconstructions of the genome, and they defined {\em omnitigs} to be the safe strings that can be extracted from a particular de Bruijn graph.  Although they considered only standard de Bruijn graphs, with equal-length strings as the vertices, their definition generalizes easily to variable-order de Bruijn graph representations, such as Lin and Pevzner's or Boucher et al.'s.  Suppose we are navigating in a variable-order de Bruijn graph representation, following these rules: if there are no outgoing edges, then we reduce the order, hoping one appears; if there is exactly one outgoing edge, then we take it (possibly increasing the current order); if there are two or more edges, then we stop.  We traverse a (variable-order) path such that we cross edges only when we have no choice or, equivalently, we generate a string appending characters only when we have no choice.  It follows that the strings we extract are safe.

  Implementing our data structure is the first challenge, but there are others: when extracting unitigs, for example, there is no risk of being caught in cycles, or extracting the same substring over and over again as part of several unitigs, but we must deal with both these possibilities when following our rules, and varying the order makes the situation even more complicated.  Nevertheless, our experiments show we extract a set of strings more informative than the unitigs, while using a reasonable amount of memory.

 \subsection{Our contribution}

 In this paper, we address the problem of generating safe strings longer than unitigs. We demonstrate that those omnitigs spelt by walks in the assembly graph where the nodes have arbitrary indegree and outdegree one, which we call \emph{right-maximal} (RM) omnitigs,  can be efficiently obtained from a data structure that encodes a variable-order de Bruijn graph. For doing that, we first propose the \emph{hidden-order} BOSS data structure (HO-BOSS), a variation of VO-BOSS that uses less space by not storing the orders of the de Bruijn graph. Then, we propose a linear-time algorithm that works over HO-BOSS to retrieve the RM omnitigs. 
 \section{Preliminaries}

\noindent\textbf{De Bruijn graphs.}
A \emph{complete} de Bruijn Graph of order $K$, or $DBG_K$, is the labelled directed cyclic graph $G=(V, E)$ over the alphabet $\Sigma = [1..\sigma]$ consisting of $\sigma^{K}$ nodes and $\sigma^{K+1}$ edges. Each node $v \in V$ is labelled with a string $p \in \Sigma^{K}$ and every edge $e=(v, u) \in E$ represents the string $s \in \Sigma^{K+1}$ such that the label of $v$ is the prefix $s[1..K]$, the label of $u$ is the suffix $s[2..K+1]$, and the label of $e$ is the character $s[K+1]$.

The de Bruijn graph of order $K$ of a set of strings $\mathcal{S} = \{ S_1, S_2, \ldots, S_n\}$, or $DBG_K^{\mathcal{S}}$, is the subgraph of the complete $DBG_K$ induced by the edges that represent the set $\mathcal{K} \subseteq \Sigma^{K+1}$ of all $(K+1)$mers present in $\mathcal{S}$.

A \emph{variable order} $DBG_K^{\mathcal{S}}$, or $voDBG_K^{\mathcal{S}}$, is a graph
formed by the union of all the graphs $DBG_k^{\mathcal{S}}$, with $1\leq k \leq K$ for $\mathcal{S}$. Every $DBG_k^{\mathcal{S}}$ represents a \emph{context} of the $voDBG_K^{\mathcal{S}}$. In addition to the (directed) edges of each $DBG_k^{\mathcal{S}}$, two nodes $v$ and $v'$, with $v \in DBG_k^{\mathcal{S}}$, $v' \in DBG_{k'}^{\mathcal{S}}$, and $k > k'$, are connected by an undirected edge $(v, v')$ if the label of $v'$ is a suffix of the label of $v$. Following the edge $(v, v')$ from $v$ to $v'$, or vice-versa, is called a \emph{change of order}.

\bigskip\noindent\textbf{Rank and select data structures.}
\emph{Rank} and \emph{select} dictionaries are fundamental in most succinct data structures. Given a sequence $B[1..n]$ of elements over the alphabet $\Sigma=[1..\sigma]$, $B.\texttt{rank}_{b}(i)$ with $i \in [1..n]$ and $b\in\Sigma$, returns the number of times the element $b$ occurs in $B[1..i]$, while $B.\texttt{select}_b(i)$ returns the position of the $i$th occurrence of $b$ in $B$. For binary alphabets, $B$ can be represented in $n+o(n)$ bits so that \texttt{rank} and \texttt{select} are solved in constant time \cite{Cla96}. When $B$ has $m \ll n$ 1s, a compressed representation using $m\lg\frac{n}{m}+O(m)+o(n)$ bits, still solving the operations in constant time, is of interest \cite{RRR07}. This space is $o(n)$ if $m=o(n)$.

To support \texttt{rank} and \texttt{select} on non-binary alphabets, a popular data structure is the wavelet tree \cite{GGV03}. This structure can represent the sequence within its zero-order entropy (i.e., the entropy of the distribution of its symbols) plus a sublinear term, $o(n\log\sigma)$ bits, and offer access to any position in the sequence, as well as \texttt{rank} and \texttt{select} functionality, in time $O(\log\sigma)$. There are other representations offering compressed space and operations in time $O(\log\log\sigma)$ \cite{BCGNN14}, but they are not practical enough for the small alphabets of interest in sequence assembly.

\bigskip\noindent\textbf{Succinct representation of ordinal trees.}
An ordinal tree $T$ with $n$ nodes can be stored succinctly as a sequence of \emph{balanced parentheses} (BP) encoded as a bit vector $B=[1..2n]$. Every node $v$ in $T$ is represented by a pair of parentheses \texttt{(..)} that contain the encoding of the complete subtree rooted at $v$. Note that $B$ can be easily constructed with a \emph{depth first search} traversal over $T$. Every node of $T$ can be identified by the position in $B$ of its open parenthesis. 

A large number of navigational operations over $T$ can be simulated with a small set of primitives over $B$. These primitives are \texttt{enclose}, \texttt{open}, \texttt{close}, \texttt{rank} and \texttt{select}, where \texttt{enclose}$(i)$ gives the position of the rightmost open parenthesis that encloses $B[i]$ and \texttt{open$(i)$/close$(i)$} give the position of the parenthesis that pairs the close/open parenthesis at $B[i]$, respectively. 

Navarro and Sadakane \cite{NS14} showed that, for static trees, all these primitve operations, and consequently all the navigational operations built on them, can be answered in constant time using a data structure that needs $2n + o(n)$ bits of space.

\bigskip\noindent\textbf{BOSS representation for de Bruijn graphs.}
BOSS \cite{BOSS12} is one the most succinct data structures for encoding de Bruijn graphs. The process of building the BOSS index of the $K$-order de Bruijn graph of a set of strings $\mathcal{S}=\{S_1,S_2,\ldots\}$ is as follows: Pad every $S_i \in \mathcal{S}$ with $K$ copies of the symbol \$ and scan all the resulting strings, obtaining the set $\mathcal{K}$ of all their $(K+1)$mers. Then, sort $\mathcal{K}$ in lexicographic order from right to left, starting from their $K$th position, with ties broken by the symbols at position $K+1$. Notice that the first $K$ positions of all the $(K+1)$mers (from now on referred to as the BOSS matrix, with one row per $(K+1)$mer and one column per position in $1..K$) contain all the possible $K$mers present in $\mathcal{S}$, whereas the position $K+1$ (from now on referred to as the $E$ array) contains all the outgoing edges of $K$mers sorted in lexicographical order. To delimit the outgoing edges of every $K$mer, create a bit vector $B$ in which $B[i]=1$ if $E[i]$ is the last outgoing edge of some $K$mer in the BOSS matrix. To encode the incoming edges, store in every $E[i]$ the corresponding symbol $c$ of the edge if $E[i]$ is within a range of nodes suffixed by the same $K-1$ characters, and $E[i]$ is the first occurrence of $c$ in that range; otherwise store an alternative symbol $\overline{c}$ in $E[i]$. Finally, replace the BOSS matrix with an array $C[1..\sigma]$ with the cumulative counts of the symbols of the column $K$,
encode the $E$ array using a wavelet tree, and compress the $B$ vector with support for \texttt{rank} and \texttt{select} operations. 

Bowe et al.~\cite{BOSS12} proposed several navigational queries over the BOSS index, almost all of them with at most an $O(\log \sigma)$-factor slowdown. The most relevant for this paper are:

\begin{itemize}
    \item \texttt{outdegree$(v)$}: number of outgoing edges of $v$.
    \item \texttt{forward$(v, a)$}: node reached by following an edge from $v$ labelled with symbol $a$.
    \item \texttt{indegree$(v)$}: number of incoming edges of $v$.
    \item \texttt{backward$(v)$}: list of the nodes with an outgoing edge to $v$.
\end{itemize}

Boucher et al.~\cite{BBGPS15} noticed that by considering just the last $k$ columns in the BOSS matrix, with $k\leq K$, the resulting nodes are the same as those in the de Bruijn graph of order $k$. This property implies that, for the same dataset, all the de Bruijn graphs up to order $K$ can be implicitly encoded with the same BOSS index. To support the variable-order functionality, that is, changing the value of $k$ when necessary, they augmented BOSS with the \emph{longest common suffix} ($LCS$) array, and they call this new index the \emph{variable-order} BOSS (VO-BOSS). The $LCS$ array stores, for every node of maximal order $K$, the size of the longest suffix shared with its predecessor node in the right-to-left lexicographical ordering. Additionally, they defined the following operations:

\begin{itemize}
    \item \texttt{shorter$(v, k)$}: range of the nodes suffixed by the last $k$ characters of $v$. 
    \item \texttt{longer$(v, k)$}: list of the nodes whose labels have length $k \leq K$ and end with that of $v$. 
    \item \texttt{maxlen$(v, a)$}: a node in the index suffixed by the label of $v$, and that has an outgoing edge labelled with $a$.
\end{itemize}

Given a node $v$ of arbitrary order $k'\leq K$ represented as a range $[i, j]$ in the BOSS matrix, the operation \texttt{shorter$([i, j], k)$} can be implemented by searching for the largest $i'<i$ and the smallest $j' > j$ with $LCS[i'-1]$, $LCS[j']$ < $k$ and then returning $[i', j']$. For \texttt{larger$([i, j], k)$}, it is necessary to first obtain the set $B=\{b, i-1 \leq b \leq j \land LCS[b] <k \}$, and then for every consecutive pair $(b, b')$, report the range $[b + 1, b']$. The function \texttt{maxlen$([i, j], a)$} is the easiest: it suffices with searching in the range $[i, j]$ a node of maximum $K$ with an outgoing edge labelled with $a$.

\section{Variable order de Bruijn graphs and tries}

As mentioned in Section~\ref{sec:introduction}, some assemblers produce several de Bruijn graphs with different $K$s and then incrementally construct the contigs from those graphs.  An alternative option is to build the VO-BOSS index for the variable-order de Bruijn graph $voDBG_K^{\mathcal{R}}$. The advantage of VO-BOSS is that it allows changing the order of the nodes \emph{on the fly}, without generating different data structures for every $k \leq K$. Still, VO-BOSS increases the space complexity of the regular BOSS by an $O(\log K)$ factor, because it needs to encode the $LCS$ array of the $K$mers as a wavelet tree to support the change of order efficiently. This is an important issue, because if $K$ is too large, then memory requirements may increase significantly. Thus, a more compact way of representing $voDBG_K^{\mathcal{R}}$ is necessary.

During the traversal of $voDBG_K^{\mathcal{R}}$, if a node $v_i^k$ with no valid edges is reached, then the context can be changed from $k$ to $k'<k$ so the traversal can continue through the edges of the new node $v_i^{k'}$. However, not every change of order from $k$ to $k'$ generates new edges.

\begin{lemma}\label{l1}
Let $v_i$ and $v_j$ be two nodes in a variable-order de Bruijn graph, with labels $S_i$ and $S_j$, respectively, where $S_j$ is a proper suffix of $S_i$. Then, if no proper suffix of $S_i$ of the form $S\,S_j$ is left-maximal in $\mathcal{K}$, then the outgoing edges from $v_i$ and $v_j$ are the same. 
\end{lemma}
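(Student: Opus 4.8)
The plan is to prove the equality $\mathrm{out}(S_i)=\mathrm{out}(S_j)$ of the out-edge label sets of $v_i$ and $v_j$ by establishing its two inclusions separately. Write $k_i=|S_i|$, $k_j=|S_j|$, so $k_j<k_i\le K$, and recall from the BOSS construction that the labels of the out-edges of the node with label $S$ are exactly the distinct characters $c$ such that $Sc$ is a suffix of some $(K+1)$mer of $\mathcal{K}$; denote this set $\mathrm{out}(S)$. Since $S_j$ is a suffix of $S_i$, whenever $S_ic$ is a suffix of some $w\in\mathcal{K}$ so is $S_jc$, and hence $\mathrm{out}(S_i)\subseteq\mathrm{out}(S_j)$ unconditionally. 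For the reverse inclusion I would argue by contraposition: assuming $\mathrm{out}(S_i)\subsetneq\mathrm{out}(S_j)$, I would produce a proper suffix of $S_i$ of the form $S\,S_j$ that is left-maximal in $\mathcal{K}$, contradicting the hypothesis.

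Pick $c\in\mathrm{out}(S_j)\setminus\mathrm{out}(S_i)$ together with a $(K+1)$mer $w=u\,S_j\,c\in\mathcal{K}$ witnessing $c\in\mathrm{out}(S_j)$ ($u$ is nonempty, since $k_j<K$). Let $T$ be the longest suffix of $w[1..K]=u\,S_j$ that is also a suffix of $S_i$. Then $|T|\ge k_j$, because $S_j$ is such a common suffix; and $|T|<k_i$, because otherwise $S_i$ itself would be a suffix of $w[1..K]$, making $S_ic$ a suffix of $w$ and putting $c$ into $\mathrm{out}(S_i)$. Hence $T$ is a proper suffix of $S_i$ admitting $S_j$ as a suffix, i.e.\ $T$ is of the form $S\,S_j$ in the sense of the statement.

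It remains to show that $T$ is left-maximal in $\mathcal{K}$. Since $|T|<k_i\le K$, we may set $\alpha:=w[K-|T|]$, so that $\alpha T$ is a suffix of $w[1..K]$, and $\beta:=S_i[k_i-|T|]$, so that $\beta T$ is a suffix of $S_i$. By the maximality of $T$ the string $\alpha T$ is not a suffix of $S_i$; since $\beta T$ is, we conclude $\alpha\neq\beta$. Now $\alpha T$ occurs in $w\in\mathcal{K}$, while $\beta T$, being a substring of $S_i$, occurs in any $w'\in\mathcal{K}$ having $S_i$ as a suffix of $w'[1..K]$ --- and such a $w'$ exists precisely because $v_i$ is a node of the graph. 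Thus $T$ can be left-extended within $\mathcal{K}$ by two distinct characters, i.e.\ $T$ is left-maximal: the contradiction we sought.

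The main obstacle, I expect, is not this short combinatorial kernel but the bookkeeping around it: fixing the precise reading of ``left-maximal in $\mathcal{K}$'' --- namely, that at least two distinct characters can immediately precede an occurrence of the string among the $(K+1)$mers of $\mathcal{K}$ --- so that the statement and the argument above concern the same notion, and handling the bounds $|T|<k_i\le K$ carefully so that $\alpha$ and $\beta$ really are characters rather than out-of-range positions or boundary markers. One should also double-check the recalled description of $\mathrm{out}(S)$ in the boundary case $|S|=K$ and, if \$-padding is in force, that the $(K+1)$mer $w'$ realizing $v_i$ indeed contains $S_i$ with the needed character $\beta$ in front of $T$ inside $S_i$ itself (which it does, as $\beta T$ is a substring of $S_i$).
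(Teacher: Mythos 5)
Your proof is correct, and it takes a genuinely different route from the paper's. The paper argues forward and inductively: writing each intermediate suffix as $S=aS'$, it uses non-left-maximality to conclude that $S'$ is always preceded by $a$ in $\mathcal{K}$, so the occurrence sets of $S'$ and $S$ coincide; chaining this from $S_j$ up to $S_i$ gives equal occurrence sets and hence equal out-edges. You instead split the claim into two inclusions, observe that $\mathrm{out}(S_i)\subseteq\mathrm{out}(S_j)$ is unconditional, and prove the converse by contraposition: from a label $c\in\mathrm{out}(S_j)\setminus\mathrm{out}(S_i)$ and its witnessing $(K{+}1)$mer you extract, in one step rather than by induction, the longest common suffix $T$ of $w[1..K]$ and $S_i$, and show the two distinct left-extensions $\alpha T$ and $\beta T$ make $T$ a left-maximal proper suffix of $S_i$ of the required form. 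The bounds $k_j\le|T|<k_i$ and the existence of the second witness $w'$ are all justified. What each approach buys: the paper's induction yields the stronger intermediate fact that the full occurrence sets of $S_i$ and $S_j$ in $\mathcal{K}$ coincide (which is what actually licenses identifying the two nodes' BOSS ranges of edges, not just their label sets), and it is shorter; your contrapositive argument is more explicit about which suffix witnesses the failure and is more careful about index ranges and the exact meaning of left-maximality, at the cost of proving only the equality of out-edge label sets --- which is all the lemma literally asserts, but you may wish to note that the occurrence-set version is what the subsequent trie-synchronization argument uses.
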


\begin{proof}
Let $S = aS'$, where $a \in \Sigma$. If $S'$ is not left-maximal in $\mathcal{K}$, then $S'$ is always preceded by $a$ in $\mathcal{K}$, and thus the set of occurrences of $S'$ and $S$ in $\mathcal{K}$ are equal. By applying this argument successively, starting from $S' = S_j$ and
ending with $S=S_i$, we have that the set of occurrences of $S_i$ and $S_j$ in $\mathcal{K}$ are the same. Therefore, the nodes $v_i$ and $v_j$ have the same outgoing edges.
\end{proof}

We use Lemma \ref{l1} for traversing $voDBG_K^{\mathcal{R}}$ efficiently. We synchronize the nodes in $voDBG_K^{\mathcal{R}}$ with the nodes in the \emph{compact trie} $T_K$ induced by the reversals of the $K$mers. In $T_K$, every leaf represents a $K$mer, and $label(t_i)$ denotes the left-maximal string spelled from the root to node $t_i$. Note that $label(t_i)$ is a suffix of the $K$mers represented by the leaves in the subtree rooted at $t_i$. Thus, to shorten the context of node $v_i^k$ in $voDBG_K^{\mathcal{R}}$, we can go to its corresponding node $t_i$ in $T_K$, move to the parent of $t_i$, say $t_j$, and finally go back to the corresponding node $v_i^{k'}$ in $voDBG_K^{\mathcal{R}}$. By Lemma \ref{l1}, the nodes between $t_i$ and $t_k$ that were compacted in $T_K$ cannot yield new edges.

Still, it could be that, even when $v_i^{k'}$ has more occurrences than $v_i^k$ in $\mathcal{K}$, it has no edges to new $K$mers. Thus, if $v_i^{k'}$ has no new edges with respect to $v_i^{k}$, we repeat the operation until finding a node $v_i^{k''}$ with a shorter context that does have new edges.

To lengthen the context, the process is similar, but instead of going to the parents in $T_K$, we have to traverse the subtree rooted at $t_i$.

\subsection{HO-BOSS index}

Changing the order of a node in $voDBG_K^{\mathcal{R}}$ from $k$ to $k'$ using $T_K$ does not require us to know the value of $k$ or $k'$. Thus, we can use just the \emph{topology} of $T_K$ for jumping from node $v$ to its parent node $v'$ (or vice-versa), where the string depths of $v$ and $v'$ have \emph{hidden} values $h$ and $h'$, respectively. We only know that $v_i^{h'}$ represents the longest suffix of $v_i^h$ with chances of continuing the graph traversal, and that is enough for assembly. For supporting this hidden-order feature, we augment BOSS with a bit vector $F$ that encodes the topology of $T_K$ in \emph{balanced parenthesis} format (BP). We call this new index \emph{hidden-order} BOSS (HO-BOSS).

We limit the size of the hidden order $h$ to the range $m \leq h \leq K$, where $m$ is the minimum order allowed by the user. This minimum order is aimed at preventing the context from becoming too short and thus uninformative for the assembly. Therefore, we modify $F$ by removing all those nodes with \emph{string} depth $< m$ and then connecting the resulting subtrees to a dummy root. In this way, whenever we reach the root in $F$, we know that the context became shorter than the threshold $m$. 

While VO-BOSS requires $O(n\log K)$ bits of space to encode the $LCS$ array, where $n=|\mathcal{K}|$, HO-BOSS stores only $F$, requiring at most $4n + o(n)$ bits (since $T_K$ has $n$ leaves and less than $n$ internal nodes). HO-BOSS also stores the $B$, $E$, and $C$ arrays of BOSS.

The bitvector $F$ can be built from the $LCS$ array, in worst-case time $O(nK)$ and $O(n + K\log n)$ bits of working space, by (easily) adapting an algorithm to build suffix tree topologies from the longest common prefix array ($LCP$, analogous to $LCS$) \cite[Sec.~11.5.4]{Nav16}.

\subsection{Operations over the HO-BOSS index}

Every node $v \in T_K$ can be identified in two ways: with the position $p_v$ of its open parenthesis in $F$, or with the range $[i,j]$ of leaves under the subtree rooted at $v$. The first method is referred to as the \emph{index} nomenclature and the second one as the \emph{range} nomenclature. As every leaf in $T_K$ is also a $K$mer, we can use the range nomenclature to link nodes in $T_K$ with ranges in the BOSS matrix.

We first define four primitives over $T_K$: \texttt{close}, \texttt{enclose}, \texttt{lca}, and \texttt{children}. All of them use the index nomenclature and will be used as base for more complex queries. They can all be computed in constant time \cite{NS14}.

\begin{itemize}
    \item \texttt{close$(v)$}: the position in $F$ of the closing parenthesis of $v$.
    \item \texttt{enclose$(v)$}: the parent of $v$.
    \item \texttt{lca$(v,v')$}: the \emph{lowest common ancestor} of $v$ and $v'$.
    \item \texttt{children$(v)$}: the number of children of $v$.
\end{itemize}

We also define two primitives for mapping nodes in $T_K$ to ranges in the BOSS matrix: 

\begin{itemize}
    \item \texttt{id2range$(v)$}: transforms a node in index nomenclature to its range form. 
    \item \texttt{range2id$(i, j)$}: transforms a node in range nomenclature to its index form.
\end{itemize}

To implement the two functions defined above efficiently, we have to include two extra functions, \texttt{rank\_leaf} and \texttt{select\_leaf}, which perform \texttt{rank} and \texttt{select} operations over the pattern ``\texttt{()}'' in $F$, which is implemented in constant time \cite{NS14}. Thus, both operations, $\texttt{id2range}(v) = [\texttt{rank\_leaf}(v),\texttt{rank\_leaf}(\texttt{close}(v))]$ and $\texttt{range2id}(i,j)=\texttt{lca}(\texttt{select\_leaf}(i),\texttt{select\_leaf}(j))$, require constant time too.

Finally, to traverse the variable-order de Bruijn graph, we redefine the following functions:

\begin{itemize}
    \item \texttt{shorter$(i, j)$}: the node $[i', j']$ that is the parent of node $[i, j]$.
    \item \texttt{forward$([i, j], a)$}: the node $[i' ,j']$ obtained by following the outgoing edge of $[i,j]$ labelled with $a$.
\end{itemize}

For \texttt{shorter$(i, j)$}, first we map the range $[i,j]$ to a node $v$ in $T_K$. We then obtain its parent node $v'$, and finally, map $v'$ back to a range $[i',j']$. The whole operation is expressed as $\texttt{id2range}(\texttt{enclose}(\texttt{range2id}(i,j)))$. Thus, operation \texttt{shorter} takes constant time.

\begin{algorithm}[t]
\small
\SetAlgoLined
\LinesNumbered
\KwData{$\textrm{HOBOSS}(B,E,C,F), [i, j], a$}
\KwResult{$\texttt{forward}([i,j],a)$ in range nomenclature}
$p \leftarrow B.\texttt{select}_1(i-1)+1$\;
$q \leftarrow B.\texttt{select}_1(j)$\;
$sp \leftarrow E.\texttt{rank}_a(p-1)+1$\;
$ep \leftarrow E.\texttt{rank}_a(q)$\;
$l \leftarrow ep - sp + 1$\;
$i' \leftarrow C[a] + sp$\;
$j' \leftarrow i' + l - 1$\;
$v_{i'} \leftarrow F.\texttt{select\_leaf}(i')$\;
$v_{j'} \leftarrow F.\texttt{select\_leaf}(j')$\;
\Return{{\rm $\texttt{id2range}(F.\texttt{lca}(v_{i'},v_{j'}))$}}; 
\BlankLine
\caption{Forward operation in HO-BOSS.}
\label{alg:forward}
\end{algorithm}

The operation \texttt{forward$([i, j], a)$} is detailed in Algorithm \ref{alg:forward}. In lines 1--2, the range $(p, q)$ in the $E$ array of the BOSS index is obtained. This range contains the outgoing edges for the $K$mers within $[i, j]$. Then, the outgoing range $[i', j']$ is inferred in lines 3--7 (see Bowe et al.~\cite{BOSS12}). Since $[i', j']$ might be a subrange of the complete outgoing node, we need to obtain the range of the lowest common ancestor for leaves $i'$ and $j'$, which is done in lines 8--10. Notice that, if $[i, j]$ represents string $P$, then $\texttt{forward}([i, j], a)$ will return the range for string $Pa$. This is different from the classical \texttt{forward} operation in BOSS \cite{BOSS12}, in which the label of an outgoing node is of the same size as the current node.

Operation \texttt{forward} takes $O(\log \sigma)$ time, as it involves a constant number of operations over a wavelet tree, bitvectors, and parentheses. 

With these operations, we have a powerful machinery to build omnitigs in $voDBG_K^{\mathcal{R}}$.

\section{Assembling right-maximal omnitigs}

\emph{Right-maximal} (RM) omnitigs are those safe strings spelt by walks in $voDBG_K^{\mathcal{R}}$ with arbitrary indegree and outdegree strictly 1. RM omnitigs are expected to be present in circular and non-circular genomes and to be longer than unitigs. They are, however, more complicated to construct, because some RM omnitigs can be left-extensions of other RM omnitigs, so if we are not careful, we may report redundant sequences.

The first challenge is how to report just the \emph{longest} RM omnitigs, that is, those that are not suffixes of other RM omnitigs. This problem can be solved by just considering those walks in $voDBG_K^{\mathcal{R}}$ whose first node is a leaf in $F$ (i.e., a complete $K$mer) with outdegree 1 and with all its incoming nodes with outdegree at least 2. We call these leaves the \emph{starting nodes}. Notice that the starting nodes can be obtained by scanning the $K$mers and performing \texttt{outdegree} and \texttt{backward} queries on the regular BOSS index. 

The second challenge is how to avoid retraversing paths that were visited before. For instance, consider two RM omnitigs, $aP$ and $bP$. The walk spelling $P$ is traversed twice. To solve this, we mark all the \emph{path-merging} nodes in $voDBG_K^{\mathcal{R}}$, that is, nodes with \texttt{outdegree} 1 and \texttt{indegree} at least 2. For the marking process, we perform a depth-first traversal over $F$, and fill a bit vector $PM$ of size at most $2n$, so that $PM[i]=1$ if the $i$th node in $F$ is a path-merging node. We build the \texttt{rank} structures on $PM$. Associated with the 1s of $PM$, we will store in a vector $V$ a pointer to the position in the previous omnitig that was being generated when we passed through node $i$, so that we make our current omnitig point to it instead of generating it again. Initially, $V$ has all $null$ values.

\begin{algorithm}[t]
\small
\SetAlgoLined
\LinesNumbered
\KwData{$nodeRange$, HOBOSS, $PM$, $V$}
\KwResult{Omnitig $O$}
$O \leftarrow$ empty linked list of nodes\;
$nodeRank \leftarrow F.\texttt{rank}_{'('}(\texttt{range2id}(nodeRange))$\;
\While{true}{
    \If{$PM[nodeRank]=1$}{
       $p \leftarrow PM.\texttt{rank}_1(nodeRank)$\; \eIf{$V[p]\not=null$}{
            link $O$ with previous omnitig position, $V[p]$\; 
            {\bf break}\; 
        }{
            $V[p] \leftarrow$ current (last) list node of $O$\;
        }
    }
    \While{\rm $nodeRange$ has only $\$$s}{
        $nodeRange \leftarrow \texttt{shorter}(nodeRange)$\;
    }
    \eIf{\rm $nodeRange$ has only $a$s, $\overline{a}$s, and $\$$s, for some $a$}{
        append $a$ as a list node at the end of $O$\;
        $nodeRange \leftarrow \texttt{forward}(nodeRange,a)$\;
    }{
        {\bf break}\;
    }
}
\Return{$O$}
\BlankLine
\caption{Finding the longest RM omnitigs starting at a node.}
\label{alg:rmomnitigs}
\end{algorithm}

Algorithm \ref{alg:rmomnitigs} describes the extension of an RM omnitig $O$, which starts empty. In line 2, the rank of the input starter node is obtained. In lines 3--22 the algorithm continues the walk as long as it is not retraversing a path and the nodes have outdegree 1. In lines 4--12 we handle the visited nodes. First, we check with $PM$ if the current node is path-merging. If so, we check in $V$ if it was visited before, in which case we just link $O$ to the corresponding position of the previous omnitig and finish. If, instead, it is the first time we visit this node, we mark it as visited by associating it with the current node of $O$, and continue the traversal. In lines 13--21, we do the forward process. First, in lines 13--15, we shorten the node with \texttt{shorter}, as much as necessary to ensure there is some outgoing edge, that is, there is some non-\$ in $E[nodeRange]$. Once $nodeRange$ contains some edge, we see in lines 16--21 if it contains just one symbol, say $a$. If so, we append that $a$ to $O$ and move on to the next node using \texttt{forward}. If, instead, $nodeRange$ has more than one outgoing symbol, the generation of $O$ is complete. The result of generating all the omnitigs for all the starter nodes is a set of lists, where some can possibly merge into others. This saves both space and generation time.

The missing components are how to determine that $nodeRange$ contains only \$s, or only $a$s apart from $\$$s. To do this efficiently, we represent $E$ in a different way: a bit vector $BE$ so that $BE[i]=1$ iff $E[i]\not=\$$, and a reduced string $E'$ so that, if $E[i]\not=\$$, then $E'[BE.\texttt{rank}_1(i)]=E[i]$. All the operations on $E$ are easily translated into $BE$ and $E'$: apart from accessing it, which we have just described, we have $E.\texttt{rank}_a(i)=E'.\texttt{rank}_a(BE.\texttt{rank}_1(i))$ if $a\not=\$$ and
$E.\texttt{rank}_\$(i)=BE.\texttt{rank}_0(i)$, and $E.\texttt{select}_a(j)=BE.\texttt{select}_1(E'.\texttt{select}_a(j))$ if $a \not= \$$
and $E.\texttt{select}_\$(j)=BE.\texttt{select}_0(j)$. With this representation we have that there are only \$s in $nodeRange=[i,j]$ iff $BE.\texttt{rank}_1(j)= BE.\texttt{rank}_1(i-1)$. If there are some non-\$s, we map the range $E[i..j]$ to $E'[i',j']$, where $i'=BE.\texttt{rank}_1(i-1)+1$ and $j'=BE.\texttt{rank}_1(j)$. Then $a$ is an outgoing symbol, where $E'[i']=a$ or $E'[i']=\overline{a}$. It is the only outgoing symbol iff $(\texttt{rank}_a(E',j')-\texttt{rank}_a(E',i')) +(\texttt{rank}_{\overline{a}}(E',j')-\texttt{rank}_{\overline{a}}(E',i')) =j'-i'$. Thus, every operation in Algorithm \ref{alg:rmomnitigs} takes time $O(\log\sigma)$.

\section{Experiments}

\noindent\textbf{Datasets, code, and machine.}
Three Illumina datasets of single-end reads of 150 characters long were simulated from the reference genome of \emph{Escherichia coli} str. K-12 substr. MG1655, with coverages 5x, 10x, and 15x. The program used to simulate the reads was \texttt{wgsim} \cite{Li2012}, using error rate 0, mutation rate 0, and discarding regions which contain characters \texttt{N}. The sizes for the datasets are 62 MB for 5x, 124 MB for 10x, and 185 for 15x. 

 Both the HO-BOSS data structure and the algorithm for building RM omnitigs were implemented \footnote{https://bitbucket.org/DiegoDiazDominguez/boss-assembly} in \texttt{C++}, on top of the \texttt{SDSL-lite} \footnote{https://github.com/simongog/sdsl-lite} library. 

 All the experiments were carried out on a machine with Debian 4.9, 252 GB of RAM and procesor Intel(R) Xeon(R) Silver @ 2.10GHz, with 32 cores.

\bigskip\noindent\textbf{Memory and time usage.}
We generated 16 HO-BOSS indexes combining different values for $K$ (15, 20, $\ldots$, 50) and for $m$ (5, 10). We measured for every index its size (Figure \ref{fig:omnires}A), the memory peak (Figure \ref{fig:omnires}B), and the elapsed time (Figure \ref{fig:omnires}C) during the RM omnitigs construction. Additionally, we evaluated the fraction that each index represents with respect to the size of the original dataset (Figure \ref{fig:omnires}D). The size of the index and memory peak grow linearly with $K$. The former is about $0.8K\%$ and $1.0K\%$ of the original data size, whereas the latter is closer to $10K\%$. Indeed, our index amounts to only 7\%--14\% of the total space required by the process to find the longest omnitigs (see table \ref{tab:mem}). The table shows that, if we used $V$ as a bit vector (the minimum required to detect cycles in the omnitigs) and charge all the pointers and strings to the omnitig structure, then it takes 84\%-92\% of the space.

\begin{figure}[t]
  \includegraphics[width=\linewidth]{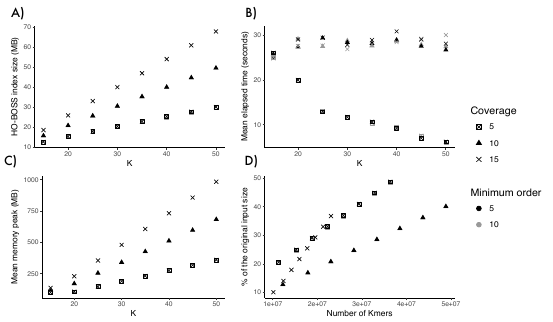}
  \caption{Statistics about the HO-BOSS index and the RM omnitigs algorithm. Each color represents a dataset and each shape a value of $m$ (both shapes are often superimposed). A) size of the HO-BOSS index vs $K$. B) mean elapsed time for the construction of RM omnitigs vs $K$. C) mean memory peak during the RM omnitigs construction vs $K$. D) Percentage of the original input data set (read sets) that each HO-BOSS index represents, as a function of the number of $K$mers.}
  \label{fig:omnires}
\end{figure}

On the other hand, the elapsed time does not change with $K$ for the coverages 10x and 15x, but it decreases with $K$ for 5x. The latter is due to the topology of the de Bruijn graph: as $K$ increases, the number of edges decreases, together with the number of possible walks spelling RM omnitigs. This is not the case for coverages 10x and 15x, because greater coverages prevent the reduction of edges. 

Figure \ref{fig:dist} shows the fraction of space that every substructure of the index uses. As expected, the $E$ array (the edges of the de Bruijn graph) and the bit vector encoding $T_K$ are the structures that use the most memory. The space distribution is stable across experiments. 

Table \ref{tab:omnivsuni} shows statistics about the structure of the HO-BOSS index. The number of $K$mers (i.e., graph nodes) increases more or less linearly with $K$, and also with the coverage. On the other hand, the number of starter nodes decreases with $K$, very sharply when moving from $K=15$ to $K=20$. This is expected since nodes with outdegree 1 are more common with larger $K$, and their targets are invalidated as starter nodes. The number of path-merging (PM) nodes also increases with $K$. The fraction of PM nodes with respect to the number of $K$mers also increases with $K$, and with the coverage. The percentages range from 15\% with $K=15$ and 5x, to 80\% with $K=50$ and 15x. This fraction impacts on the size of the HO-BOSS index, because the size of the array $V$ depends on the number of PM nodes. 

\begin{table}[t]
\centering
\begin{tabular}{cccccccc}
\hline
Dataset & $K$ & \# $K$mers & \# starter $K$mers & \# PM nodes & Max RM omnitig & Max unitig \\ 
\hline
\multirow{4}{*}{5x} &  15 & 10120448 & 505884 & 1568804 & 295 & 118 \\ 
&  20 & 12371884 & 6956 & 3387628 & 21344 & 134 \\ 
&  30 & 15959968 & 2674 & 6962222 & 31800 & 134\\ 
&  50 & 23070324 & 1156 & 14060718 & 49135 & 134\\ 
\hline
\multirow{4}{*}{10x} &  15 & 11286124 & 489177 & 2333984 & 295 & 107 \\ 
&  20 & 15262160 & 6676 & 5860764 & 35421 & 124 \\ 
&  30 & 22308302 & 2553 & 12890652 & 127947 & 124 \\ 
&  50 & 36379537 & 1111 & 26949353 & 269665 & 124 \\ 
\hline
\multirow{4}{*}{15x} &  15 & 12246035 & 469933 & 3063092 & 295 &  73 \\ 
&  20 & 17859860 & 6342 & 8216992 & 35420 &  86 \\ 
&  30 & 28196394 & 2432 & 18534531 & 127947 &  86 \\ 
&  50 & 48858570 & 1053 & 39181770 & 269665 &  86 \\ 
\hline
\end{tabular}
\caption{Statistics about the HO-BOSS indexes, for $m=10$. Column Max RM omnitig shows the size of the maximum RM omnitig obtained for that index, while column Max unitig is the size of the maximum unitig obtained with the BOSS indexes with $K$mer size ranging between $m$ and $K$. Some rows were omitted to save space.}
\label{tab:omnivsuni}
\end{table}

Table \ref{tab:performance} measures the time and space usage during the construction of the omnitigs. We show the effective number of traversed nodes (those reached by performing the forward operation in the graph) and the total number of nodes in the omnitigs (number of nodes reached by forward operation plus number of nodes linked in omnitigs due to path merging). The number of forward operations saved via path-merging is greater for low coverage (5x) and greater in general as $K$ increases. The average time and working space spent per node remains similar in all the experiments. We note that the savings are not too high, although some method is needed anyway to detect repeated traversals because otherwise a traversal may fall in a loop of the graph.


\bigskip\noindent\textbf{Quality of assembled omnitigs.}
We assessed the quality of the assembled RM omnitigs by mapping them back to the original reference genome using BLAST \cite{AGMML90}. An RM omnitig was classified as \emph{correctly assembled} if 100\% of its sequence could be aligned to the reference genome, with neither mistmatches nor gaps. Figure \ref{fig:syn_frac} shows the fraction of RM omnitigs correctly assembled in each experiment. Additionally, Figure \ref{fig:good_rm_sizes} shows the length distribution of the RM omnitigs correctly assembled. As expected, the results improve with higher coverage, being almost perfect for 10x and 15x. On 5x, we note that $K=15$ yields the best results, although the omnitigs generated are much shorter (see Table~\ref{tab:omnivsuni}).

\begin{figure}[t!]
\centering
  \includegraphics[width=0.6\linewidth]{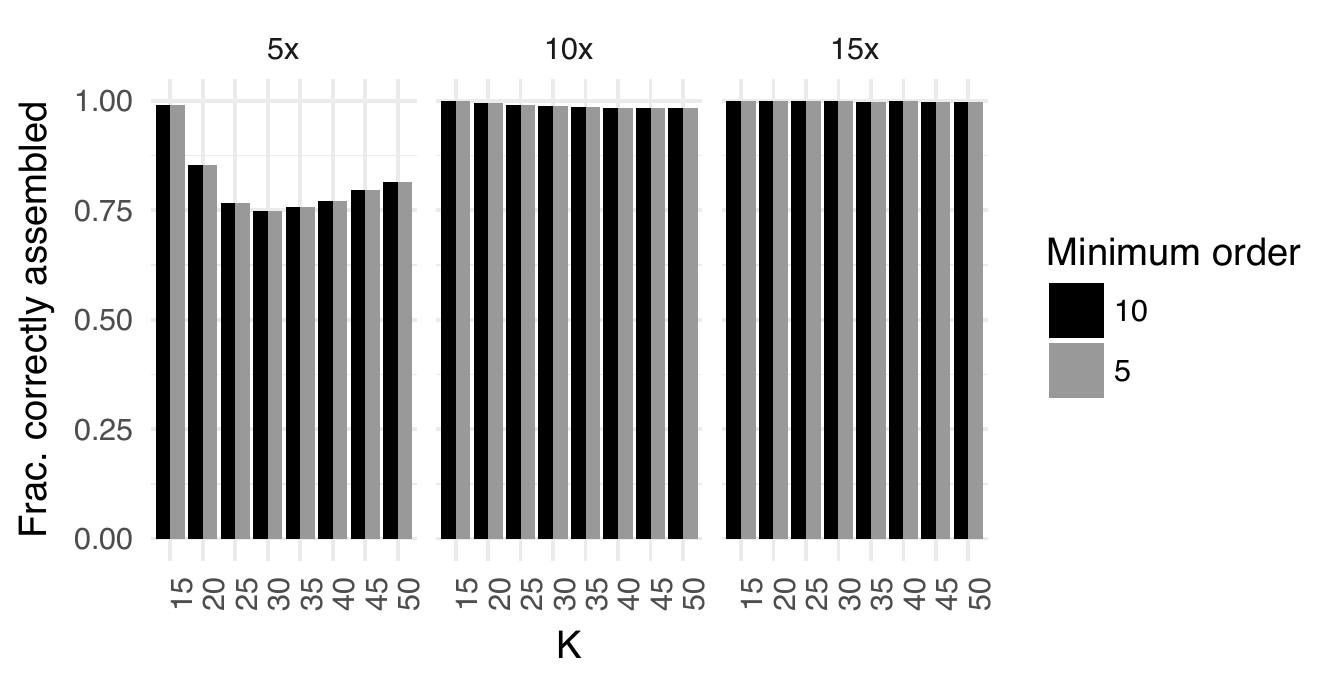}
  \caption{Omnitigs correctly assembled for coverages 5x, 10x, and 15x. The x-axis is the value for $K$ used to build the HO-BOSS index and the y-axis the fraction of omnitigs that were correctly assembled. Colors refer to the minimum value $m$. }
\label{fig:syn_frac}
\end{figure}

\begin{figure}[t!]
\centering
  \includegraphics[width=0.7\linewidth]{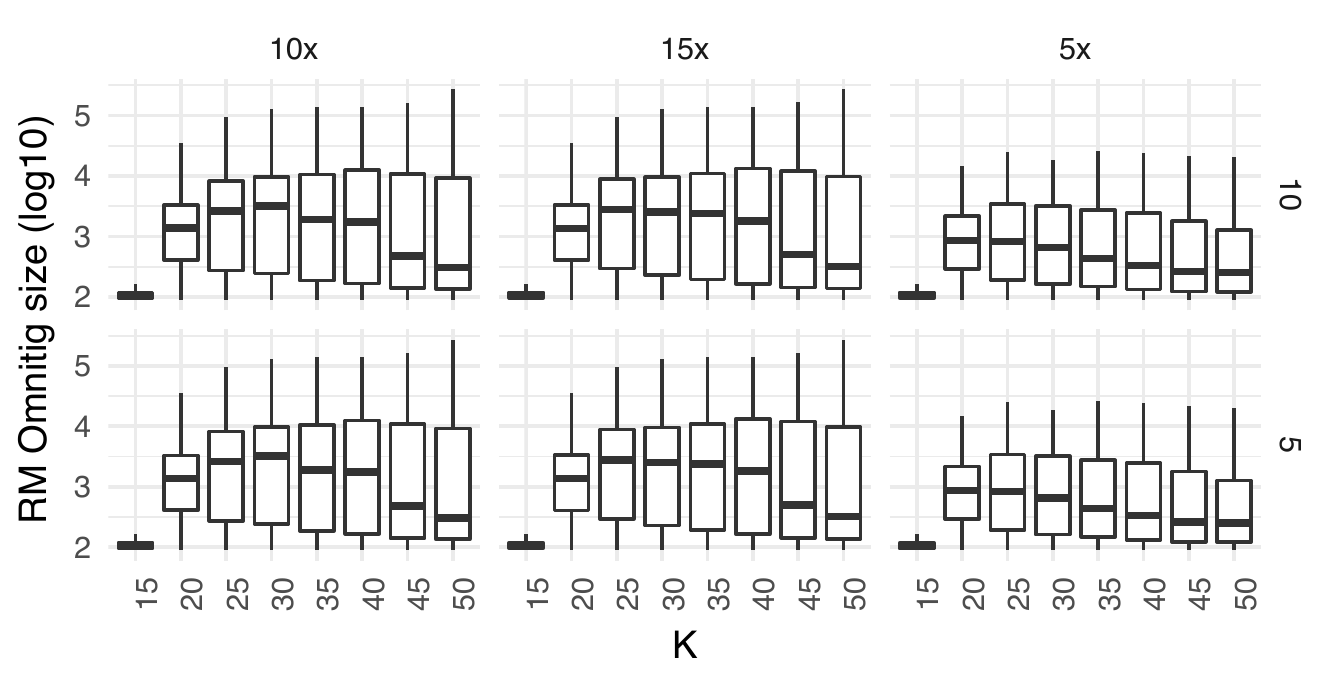}
  \caption{Length distribution of correctly assembled omnitigs. The x-axis gives the values of $K$ used to build the HO-BOSS indexes and the y-axis shows the length of the omnitigs. Plots in the columns represent the different datasets and plots in the rows represent the different values for $m$.}
\label{fig:good_rm_sizes}
\end{figure}

\bigskip\noindent\textbf{Comparing omnitig vs unitig lengths.}
We compare the sizes of the correctly assembled omnitigs and their equivalent unitigs using the following procedure. First, we built 40 regular BOSS indexes for each dataset, with $K$ ranging from 10 to 50, and the unitigs were inferred independently in those indexes. Subsequently, for every HO-BOSS index with $m=10$ and $15\leq K\leq 50$, we compare its RM omnitig of maximum size with all the unitigs of maximum size obtained in the regular BOSS indexes with $10 \leq K' \leq K$.

The sizes of maximum omnitigs versus maximum unitigs are shown in Table \ref{tab:omnivsuni}. In Figure \ref{fig:good_rm_sizes} we can see that the longest unitigs are near $\log_{10}length=2$, below most of the omnitigs we generate and about an order of magnitude below the mean omnitig length.  


\section{Conclusion and further work}

We have presented HO-BOSS, an approach to assembling a subclass of omnitigs on variable-order de Bruijn graphs without using and LCP array, and our experimental results show that we can extract remarkably long omnitigs that map perfectly to the source genomes.  Our approach still needs work to handle, for example, sequencing errors or paired-reads.

We hope HO-BOSS will serve as a starting point for the development of succinct genome assembers that produce longer sequences than current methods.  In order for this hope to be realized, we are working on more efficient algorithms to build the BOSS index, generate a {\em canonical} representation of the $K$mers (linking the information of a $K$mer with it reverse complement), and resolve problematic substructures in the graph such as bubbles and tips.

Our attempt to avoid retraversing suffixes of omnitigs already generated seems to reduce the amount of work only by a small margin (less than 15\%), but it makes the process require an order of magnitude more memory than the succinct HO-BOSS index. In the final version of this paper we will investigate an alternative that uses much less space: $V$ is converted into a bit vector that marks the nodes visited {\em for the current omnitig}, with the purpose of avoiding cycles only; no attempt is made to detect repeated suffixes of other omnitigs. 

\bibliography{hoboss}
\appendix
\setcounter{table}{0}
\renewcommand{\thetable}{A\arabic{table}}

\newpage

\section{Appendix}

\begin{table}[h]
\centering
\begin{tabular}{ccccc}
\hline
Dataset& K   & \%HO-BOSS   & \%V & \%Omni\\
\hline
\multirow{4}{*}{5x} & 15  & 13.172 & 1.104 & 85.724\\
& 20  & 12.135 & 1.491 & 86.374\\
& 30  & 8.942 & 1.349 &  89.709\\
& 50 & 7.254 & 1.228 & 91.517\\
\hline
\multirow{4}{*}{10x} & 15 & 13.497 & 1.153 & 85.350\\
& 20 & 11.236 & 1.417 & 87.347\\
& 30 & 8.332 & 1.280 & 90.388\\
& 50 & 6.899 & 1.180 & 91.921\\
\hline
\multirow{4}{*}{15x} & 15 & 12.558 & 1.004 & 86.438\\
& 20 & 14.360 & 1.626 & 84.014\\
& 30 & 10.722 & 1.525 & 87.753\\
& 50 & 8.334 & 1.365 & 90.301\\
\hline
\end{tabular}
\caption{Distribution of memory across the data structures during the memory peak of the omnitigs assembly. We show the percentage of memory occupied by the HO-BOSS index, by $V$ and by the pointers and strings. Here, $V$ is taken as a bit vector and its pointer is charged to \%Omni.}
\label{tab:mem}
\end{table}

\begin{table}[h]
\centering
\begin{tabular}{ccccccc}
\hline
Dataset & $K$ & traversed nodes & omnitig nodes & \% reduction & $\mu$seconds per node \\ 
\hline
\multirow{4}{*}{5x} & 15 & 7922657 & 8019021& 1.20 & 2.693\\
& 20 & 6275049 & 7097839& 11.59 & 2.684  \\
& 30 & 3670985 & 4298129& 14.59 & 2.652  \\
& 50 & 2139434 & 2509511& 14.75 & 2.740  \\
\hline
\multirow{4}{*}{10x} & 15 & 7659722 & 7743926& 1.09 & 2.959 \\
& 20 & 8308020 & 8399299& 1.09 & 2.949 \\
& 30 & 8264133 & 8489031& 2.65 & 3.092 \\
& 50 & 7592128 & 8022769& 5.39 & 3.083  \\
\hline
\multirow{4}{*}{15x} & 15 & 7370858 & 7448419& 1.04 & 2.880 \\
& 20 & 8079033 & 8120524& 0.51 & 3.029 \\
& 30 & 7637694 & 7688838& 0.67 & 3.122  \\
& 50 & 7703818 & 7904738& 2.54 & 3.210  \\
\hline
\end{tabular}
\caption{Performance statistics. The third column shows the effective number of traversed nodes in the variable order de Bruijn graph. The fourth column shows the total number of nodes that the assembled RM omnitigs represent. The fifth column shows the percentage of the total number of nodes that were not traversed due to our merging approach and the sixth and seventh columns show the average time and working space used per traversed node. Some rows were omitted to save space.}
\label{tab:performance}
\end{table}

\begin{figure}[h]
\centering
  \includegraphics[width=0.95\linewidth]{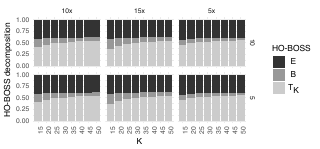}
  \caption{Fraction of memory that every substructure of the HO-BOSS index occupies. Plots in the columns represent the datasets while the plots in the rows represent the different values for $m$. The y-axis is the fraction of memory used by every data structure and the x-axis is the value used for $K$.}
  \label{fig:dist}
\end{figure}

\end{document}